\documentclass[10pt,journal,epsfig]{IEEEtran}

\usepackage[dvips]{graphicx}
\usepackage{graphicx}
\usepackage{amssymb}
\usepackage{cite}
\usepackage{subfigure}
\usepackage{amsmath}
\usepackage{amsthm}
\usepackage{algorithm}
\usepackage{algorithmic}
\usepackage{multirow}

\begin{document}

\title{Compressed Channel Estimation
for IRS-Assisted Millimeter Wave OFDM Systems: A Low-Rank Tensor
Decomposition-Based Approach}

\author{Xi Zheng, Peilan Wang, and Jun Fang, ~\IEEEmembership{Senior Member}
and Hongbin Li,~\IEEEmembership{Fellow,~IEEE}
\thanks{Xi Zheng, Peilan Wang, and Jun Fang are with the National Key Laboratory
of Science and Technology on Communications, University of
Electronic Science and Technology of China, Chengdu 611731, China,
Email: JunFang@uestc.edu.cn}
\thanks{Hongbin Li is
with the Department of Electrical and Computer Engineering,
Stevens Institute of Technology, Hoboken, NJ 07030, USA, E-mail:
Hongbin.Li@stevens.edu}
\thanks{\copyright~2022 IEEE. Personal use of this material
is permitted. Permission from IEEE must be obtained for all other uses, in
any current or future media, including reprinting/republishing this material for
advertising or promotional purposes, creating new collective works, for resale
or redistribution to servers or lists, or reuse of any copyrighted component of
this work in other works.}}

\maketitle

\begin{abstract}
We consider the problem of downlink channel estimation for
intelligent reflecting surface (IRS)-assisted millimeter Wave
(mmWave) orthogonal frequency division multiplexing (OFDM)
systems. By exploring the inherent sparse scattering
characteristics of mmWave channels, we show that the received
signals can be expressed as a low-rank third-order tensor that
admits a tensor rank decomposition, also known as canonical
polyadic decomposition (CPD). A structured CPD-based method is
then developed to estimate the channel parameters. Our analysis
reveals that the training overhead required by our proposed method
is as low as $\mathcal{O}(U^2)$, where $U$ denotes the sparsity of
the cascade channel. Simulation results are provided to illustrate
the efficiency of the proposed method.
\end{abstract}

\begin{keywords}
Intelligent reflecting surface, millimeter wave communications,
channel estimation.
\end{keywords}

\section{Introduction}
IRS has emerged as a promising solution to address the blockage
issue and extend the coverage for mmWave communications.
Nevertheless, due to the passive nature of reflecting elements and
the large size of the channel matrix resulting from massive units
at the IRS, channel estimation for IRS-assisted mmWave systems is
very challenging. To reduce the training overhead, some previous
studies exploited the inherent sparsity of mmWave channels and
developed compressed sensing-based methods to estimate the cascade
BS-IRS-user channel \cite{WangFang20,LiuGao20,WeiShen21}. These
works are mainly concerned with the estimation of narrowband
channels. MmWave systems, however, are very likely to operate on
wideband channels with frequency selectivity. As for wideband
channels, the work \cite{WanGao20} considered channel estimation
for IRS-assisted mmWave OFDM systems, where a distributed
orthogonal matching pursuit (OMP) algorithm was proposed by
utilizing the common angular-domain sparsity shared by different
subcarriers. The work \cite{WanGao20}, however, assumes that the
BS-IRS channel is LOS-dominated and known \emph{a priori}.

Recently, some tensor decomposition-based methods, e.g.
\cite{LinJin21,LiHuang21,AraujoAlmeida21}, were proposed for
IRS-assisted systems by exploring the intrinsic multi-dimensional
structure of the received signals. These works, however, did not
utilize the sparse scattering characteristic of the mmWave
channel. In their formulation, the CP rank of the constructed
tensor is equal to the number of reflecting elements at the IRS.
As a consequence, these methods require a training overhead
proportional to the number of reflecting elements, which is
usually large in practice.

In this paper, we develop a new tensor-decomposition channel
estimation method for IRS-assisted mmWave OFDM systems. Different
from \cite{LinJin21,LiHuang21,AraujoAlmeida21}, our work
formulates the received signal as a low-rank third-order tensor by
exploiting the inherent sparse structure of the cascade channel.
The CP rank of the constructed tensor is equal to the sparsity of
the cascade channel. This low-rank structure enables to obtain a
reliable estimate of the cascade channel using only a very small
amount of training overhead. Another challenge of our problem lies
in that, due to the nature of the cascade channel, one of the
factor matrices of the tensor has redundant columns. As a result,
the Kruskal's condition, which is essential to the uniqueness of
the CPD, does not hold and existing CPD-based methods, e.g.
\cite{ZhouFang17}, cannot be applied. To address this difficulty,
in our work, the Vandermonde structure of the factor matrix is
invoked to develop a structured CPD method for channel estimation.

\section{System Model}
Consider an IRS-assisted mmWave OFDM system, where an IRS is
deployed to assist data transmission from the BS to an
omnidirectional-antenna user. For simplicity, we assume that the
direct link between the BS and the user is blocked due to poor
propagation conditions. The total number of OFDM tones
(subcarriers) is $P_0$, among which $P$, say $\{1,2, \cdots ,P\}$,
subcarriers are selected for training. The BS is equipped with a
uniform linear array (ULA) with $N$ antennas and $R$ radio
frequency (RF) chains, where ${R} \ll {N}$. The IRS is a uniform
planar array (UPA) with $M = {M_x} \times {M_y}$ passive
reflecting elements. Each element can independently reflect the
incident signal with a reconfigurable phase shift. Denote
$\mathbf{\Phi} \triangleq \operatorname{diag}(\boldsymbol{v})
\triangleq \operatorname{diag}(e^{j \gamma_{1}}, e^{j \gamma_{2}},
\cdots, e^{j \gamma_{M}})$ as the phase-shift matrix, where
${\boldsymbol{v}\in\mathbb{C}^{M}} $ is the phase-shift vector,
${\gamma _i} \in [ {0,2\pi } ]$ denotes the phase shift
coefficient associated with the $i$th passive reflecting element.

In this paper, we adopt a geometric wideband mmWave channel model
\cite{AlkhateebHeath16} to characterize the channel between the BS
(IRS) and the IRS (user). Specifically, the BS-IRS channel in the
delay domain can be expressed as
\begin{equation}
\boldsymbol{G}(\tau)=\sum_{l=1}^{L} \alpha_{l}
\boldsymbol{a}_{\mathrm{IRS}} \left(\vartheta_{a, l},
\vartheta_{e, l}\right) \boldsymbol{a}_{\mathrm{BS}}^{T}
\left(\phi_{l}\right) \delta\left(\tau-\tau_{l}\right) \\
\end{equation}
where $L$ is the total number of paths between the BS and the IRS,
$\alpha_l$ is the complex gain associated with the $l$th path,
$\phi_{l}$ represents the angle of departure (AoD),
$\{\vartheta_{a,l}, \vartheta_{e,l}\}$ denote the azimuth and
elevation angles of arrival (AoAs), $\tau _l$ denotes the time
delay, $\delta \left( \tau \right)$ denotes the Dirac-delta
function, $\boldsymbol{a}_{{\rm{IRS}}}(\vartheta,\eta)$ and
$\boldsymbol{a}_{\text{BS}}(\phi)$ represent the receive and
transmit array response vectors, respectively. Similarly, the
IRS-user channel in the delay domain is modeled as
\begin{equation}
\boldsymbol{r}(\tau)=\sum_{l=1}^{L_{r}} \varrho_{l}
\boldsymbol{a}_{\mathrm{IRS}}\left(\chi_{a, l}, \chi_{e, l}\right)
\delta\left(\tau-\kappa_{l}\right)
\end{equation}
where $L_r$ is the number of paths between the IRS and the user,
$\varrho_{l}$ denotes the associated complex path gain,
$\{\chi_{a, l}, \chi_{e, l}\}$ denote the azimuth and elevation
angles of departure, and $\kappa_{l}$ is the time delay.

\begin{figure}[t]
\centering
\includegraphics[width=5.5cm]{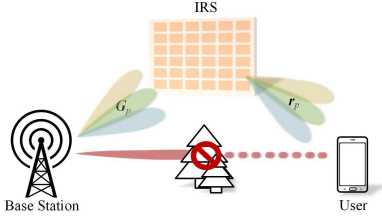}
\caption{IRS-assisted mmWave systems.}
\end{figure}

Accordingly, the frequency-domain BS-IRS and IRS-user channel
matrices associated with the $p$th subcarrier can be respectively
written as
\begin{equation}
\boldsymbol{G}_{p}=\sum_{l=1}^{L} \alpha_{l} e^{-j 2 \pi f_{s}
\tau_{l} \frac{p}{P_{0}}}
\boldsymbol{a}_{\mathrm{IRS}}\left(\vartheta_{a, l}, \vartheta_{e,
l}\right) \boldsymbol{a}_{\mathrm{BS}}^{T}\left(\phi_{l}\right)
\label{eqn1}
\end{equation}
\begin{equation}
\boldsymbol{r}_{p}=\sum_{l=1}^{L_{r}} \varrho_{l} e^{-j 2 \pi
f_{s} \kappa_{l} \frac{p}{P_{0}}}\boldsymbol{a}_{\rm{IR
S}}\left(\chi_{a, l}, \chi_{e, l}\right) \label{eqn2}
\end{equation}
where ${f_s} = 1/{T_s}$ is the sample frequency.

To facilitate the algorithmic development, we consider a
framed-based downlink training protocol. For each subcarrier, the
BS employs $T$ different beamforming vectors at $T$ consecutive
time frames. Each time frame is divided into $Q$ time slots. At
the $q$th time slot, the IRS uses an individual phase-shift
matrix, denoted as $\boldsymbol{\Phi}_q$, to reflect the incident
signal. The beamforming vector associated with the $p$th
subcarrier at the $t$th time frame can be expressed as
$\boldsymbol{x}_p(t)=\boldsymbol{F}_{\text{RF}}(t)\boldsymbol{f}_{\text{BB},p}(t)s_p(t)$,
where $s_p(t)$ denotes the $p$th subcarrier's pilot symbol,
$\boldsymbol{f}_{\text{BB},p}(t)\in\mathbb{C}^{R}$ is the digital
precoding vector for the $p$th subcarrier, and
$\boldsymbol{F}_{\text{RF}}(t)\in\mathbb{C}^{N\times R}$ is a RF
precoder common to all subcarriers. For simplicity, we assume that
$\boldsymbol{f}_{\text{BB},p}(t)=\boldsymbol{f}_{\text{BB}}(t)$
and $s_p(t)=1,\forall p$, in which case we have
\begin{align}
\boldsymbol{x}_p(t)=\boldsymbol{f}(t)\triangleq\boldsymbol{F}_{\text{RF}}(t)\boldsymbol{f}_{\text{BB}}(t),
\forall p
\end{align}
The transmitted signal arrives at the user via propagating through
the BS-IRS-user channel. At the $t$th time frame, the received
signal associated with the $p$th subcarrier at the $q$th time slot
can thus be written as
\begin{align}
y_{p, q}(t) =\boldsymbol{r}_{p}^{T} \boldsymbol{\Phi}_{q}
\boldsymbol{G}_{p} \boldsymbol{f}(t)+n_{p, q}(t)
=\boldsymbol{v}_{q}^T\boldsymbol{H}_{p}\boldsymbol{f}(t)+n_{p,
q}(t) \label{received-signal}
\end{align}
where $\boldsymbol{\Phi}_{q}=\text{diag}(\boldsymbol{v}_{q})$,
$\boldsymbol{H}_{p}\triangleq\text{diag}(\boldsymbol{r}_{p})\boldsymbol{G}_{p}$
denotes the cascade BS-IRS-user channel associated with the $p$th
subcarrier, and ${n_{p,q}}\left( t \right)$ denotes the additive
Gaussian noise.

Substituting (\ref{eqn1})--(\ref{eqn2}) into $\boldsymbol{H}_{p}$,
we arrive at
\begin{align}
\nonumber \boldsymbol{H}_{p}&=\sum_{m=1}^{L_{r}} \sum_{n=1}^{L}
\varrho_{m} \alpha_{n} e^{-j 2 \pi f_{s}
\frac{p}{P_{0}}\left(k_{m}+\tau_{n}\right)}
\\\nonumber
 &\ \ \ \ \ \ \ \ \times\boldsymbol{a}_{\mathrm{IRS}}\left(\chi_{a, m}+
 \vartheta_{a, n}, \chi_{e, m}+\vartheta_{e, n}\right) \boldsymbol{a}_{\mathrm{BS}}^{T}\left(\phi_{n}\right)\\
&\stackrel{(a)}{=} \sum_{u=1}^{L_r L} \beta_{u} e^{-j 2 \pi
f_{s} \frac{p}{P_{0}} \iota_{u}}
\boldsymbol{a}_{\mathrm{IRS}}\left(\omega_{a, u}, \omega_{e,
u}\right) \boldsymbol{a}_{\mathrm{BS}}^{T}\left(\phi_{u}\right)
\label{hpq2}
\end{align}
where the mapping process $\left( a \right)$ is defined as
\begin{gather}
\nonumber (m-1) L+n \mapsto u, u=1, \ldots, L L_{r} \\\nonumber
\varrho_{m} \alpha_{n} \mapsto \beta_{u}, u=1, \ldots, L L_{r}
\\\nonumber \kappa_{m}+\tau_{n} \mapsto \iota_{u}, u=1, \ldots, L
L_{r} \\\nonumber \boldsymbol{a}_{\rm{IR S}}\left(\chi_{a,
m}+\vartheta_{a, n}, \chi_{e, m}+\vartheta_{e, n}\right) \mapsto \
 \boldsymbol{a}_{\mathrm{IRS}}\left(\omega_{a, u}, \omega_{e, u}\right)\\
\boldsymbol{a}_{\mathrm{BS}}\left(\phi_{n}\right) \mapsto
\boldsymbol{a}_{\mathrm{BS}}\left(\phi_{u}\right), n= {\bmod (u,
L)}
\end{gather}
Our objective is to estimate the cascade channel matrices
$\{\boldsymbol{H}_{p}\}$ from the received measurements
$\{y_{p,q}(t)\}$. Note that in the data transmission stage, the
knowledge of $\{\boldsymbol{H}_p\}$ suffices for joint active and
passive beamforming, i.e. optimizing $\boldsymbol{v}$ and
$\{\boldsymbol{f}_p=\boldsymbol{F}_{\text{RF}}\boldsymbol{f}_{\text{BB},p}\}_p$
to maximize the spectral efficiency.

\section{Proposed CPD-Based Method}
\subsection{Low-Rank Tensor Representation}
Substituting (\ref{hpq2}) into (\ref{received-signal}), we obtain
\begin{align}
\nonumber y_{p, q}(t) &= \sum_{u=1}^{L_{r} L} \beta_{u} e^{-j 2
\pi f_{s} \frac{p}{P_{0}} \iota_{u}} \boldsymbol{v}_{q}^{T}
\boldsymbol{a}_{\mathrm{IRS}}
\left(\omega_{a, u}, \omega_{e, u}\right) \\
&\ \ \ \ \ \ \ \ \ \ \ \ \ \ \ \ \ \ \ \ \
\times\boldsymbol{a}_{\mathrm{BS}}^{T}\left(\phi_{u}\right)
\boldsymbol{f}(t)+n_{p, q}(t)
\end{align}
Define $\boldsymbol{y}_{p}(t) \triangleq\left[y_{p,
1}(t)\phantom{0} \cdots\phantom{0} y_{p, Q}(t)\right]^{T} \in
\mathbb{C}^{Q}$. The received signal at the $t$th time frame can
be written as
\begin{align}
\nonumber \boldsymbol{y}_{p}(t)&=\sum_{u=1}^{L_{r} L} \beta
_{u} e^{-j 2 \pi f_{s}
\frac{p}{P_{0}} \iota_{u}} \boldsymbol{V}^{T} \boldsymbol{a}_{\rm{IRS}}\left(\omega_{a, u}, \omega_{e, u}\right) \\
&\ \ \ \ \ \ \ \ \ \ \ \ \ \ \ \
\times\boldsymbol{a}_{\mathrm{BS}}^{T}\left(\phi_{u}\right)
\boldsymbol{f}(t)+\boldsymbol{n}_{p}(t)
\end{align}
where
\begin{gather}
\boldsymbol{V} \triangleq\left[\boldsymbol{v}_{1}\phantom{0}
\cdots\phantom{0}
\boldsymbol{v}_{Q}\right] \in \mathbb{C}^{M \times Q} \\
\boldsymbol{n}_{p}(t) \triangleq\left[n_{p, 1}(t)\phantom{0}
\cdots\phantom{0} n_{p, Q}(t)\right]^{T} \in \mathbb{C}^{Q}
\end{gather}

After receiving signals across all $T$ time frames, the received
signal associated with the $p$th subcarrier can be further
expressed as a matrix
\begin{align}
\boldsymbol{Y}_{p}=\sum_{u=1}^{L_{r} L} \beta _{u} e^{-j 2 \pi
f_{s}\frac{p}{P_{0}} \iota_{u}}
\tilde{\boldsymbol{a}}_{\mathrm{IRS}}\left(\omega_{a, u},
\omega_{e, u}\right)
\tilde{\boldsymbol{a}}_{\mathrm{BS}}^{T}\left(\phi_{u}\right)+\boldsymbol{N}_{p}
\end{align}
where
\begin{gather}
\nonumber \boldsymbol{Y}_{p}
\triangleq\left[\boldsymbol{y}_{p}(1)\phantom{0} \cdots\phantom{0}
\boldsymbol{y}_{p}(T)\right] \in \mathbb{C}^{Q \times T}
\\\nonumber \tilde{\boldsymbol{a}}_{\mathrm{IRS}}\left(\omega_{a,
u}, \omega_{e, u}\right) \triangleq \boldsymbol{V}^{T}
\boldsymbol{a}_{\mathrm{IRS}}\left(\omega_{a, u}, \omega_{e,
u}\right) \in \mathbb{C}^{Q} \\\nonumber
\tilde{\boldsymbol{a}}_{\mathrm{BS}}\left(\phi_{u}\right)
\triangleq \boldsymbol{F}^{T}
\boldsymbol{a}_{\mathrm{BS}}\left(\phi_{u}\right) \in
\mathbb{C}^{T}\\\nonumber \boldsymbol{F}
\triangleq[\boldsymbol{f}(1)\phantom{0} \cdots\phantom{0}
 \boldsymbol{f}(T)] \in \mathbb{C}^{N \times T} \\
\boldsymbol{N}_{p}
\triangleq\left[\boldsymbol{n}_{p}(1)\phantom{0} \cdots\phantom{0}
\boldsymbol{n}_{p}(T)\right] \in \mathbb{C}^{Q \times T}
\end{gather}
As signals from multiple subcarriers are available at the
receiver, the received signal can be expressed as a third-order
tensor $\boldsymbol{\cal Y} \in \mathbb{C}^{Q \times T \times P}$.
It can be readily verified that the tensor $\boldsymbol{\cal Y}$
admits a CPD form as
\begin{align}
\boldsymbol{\cal Y} =\sum_{u=1}^{U}
\tilde{\boldsymbol{a}}_{\mathrm{IRS}}\left(\omega_{a, u},
\omega_{e, u}\right) \circ \left(\beta _{u}
\tilde{\boldsymbol{a}}_{\mathrm{BS}}\left(\phi_{u}\right)\right)
\circ \boldsymbol{g}\left(\iota_{u}\right)+\boldsymbol{\cal N}
\end{align}
where $U\triangleq L L_r$, $\boldsymbol{\cal N} \in \mathbb{C}^{Q
\times T\times P }$ is the tensor representation of the
observation noise, and
\begin{align}
\boldsymbol{g}\left(\iota_{u}\right) \triangleq [e^{-j 2 \pi
\frac{f_{s}}{P_{0}}{\iota _{u}}}\phantom{0} \ldots\phantom{0}
e^{-j 2 \pi \frac{f_{s}}{P_{0}} P{\iota _{u}}}]^{T} \in
\mathbb{C}^{P }
\end{align}

Define
\begin{align}
\boldsymbol{A}\triangleq&\left[\tilde{\boldsymbol{a}}_{\mathrm{IRS}}\left(\omega_{a,
1}, \omega_{e, 1}\right)\phantom{0} \ldots\phantom{0}
\tilde{\boldsymbol{a}}_{\mathrm{IRS}}\left(\omega_{a, U},
\omega_{e, U}\right)\right] \in \mathbb{C}^{Q \times U}
\\
\boldsymbol{B}\triangleq&\left[\beta _{1}
\tilde{\boldsymbol{a}}_{\mathrm{BS}}\left(\phi_{1}\right)\phantom{0}
\ldots\phantom{0}\beta
_{U}\tilde{\boldsymbol{a}}_{\mathrm{BS}}\left(\phi_{U}\right)\right]
\in \mathbb{C}^{T \times U} \\
\boldsymbol{C} \triangleq& \left[
\boldsymbol{g}\left(\iota_{1}\right)\phantom{0} \ldots\phantom{0}
\boldsymbol{g}\left(\iota_{U}\right)\right] \in \mathbb{C}^{P
\times U} \label{factor-matrix-C}
\end{align}
Here $\{\boldsymbol{A},\boldsymbol{B},\boldsymbol{C}\}$ are the
factor matrices of the tensor $\boldsymbol{\cal Y}$. We see that
the channel parameters $\left\{ {{{\omega }_{a,u}},{{\omega
}_{e,u}},{{\phi }_u},{{ \iota }_u},{{ \beta }_u}} \right\}_{u =
1}^{ U}$ can be readily estimated from the factor matrices.
Inspired by this observation, we first estimate the three factor
matrices from the tensor $\boldsymbol{\cal Y}$, and then estimate
the associated channel parameters based on the estimated factor
matrices.

\subsection{Uniqueness Condition}
A well-known sufficient condition for the uniqueness of the CP
decomposition is given in \cite{KruskalJoseph77} and summarized
as
\newtheorem{theorem}{Theorem}
\begin{theorem}
Let ${\boldsymbol{\chi}}\in\mathbb{C}{^{I \times J \times K}}$ be
a third-order tensor decomposed of three factor matrices
${\boldsymbol{A}^{(1)}} \in \mathbb{C}^{I \times R}$,
${\boldsymbol{A}^{(2)}} \in\mathbb{C}^ {{J \times R}}$ and
${\boldsymbol{A}^{(3)}} \in\mathbb{C} {^{K \times R}}$, if the
condition
\begin{align}
k_{\boldsymbol{A}^{(1)}} + k_{\boldsymbol{A}^{(2)}} +
k_{\boldsymbol{A}^{(3)}} \ge 2R + 2 \label{Kruskal-condition}
\end{align}
is satisfied, then the CPD of ${\boldsymbol{{\chi }}}$ is unique
up to scaling and permutation ambiguities. Here
$k_{\boldsymbol{A}}$ denotes the k-rank of $\boldsymbol{A}$, which
is defined as the largest value of $k_{\boldsymbol{A}}$ such that
every subset of $k_{\boldsymbol{A}}$ columns of $\boldsymbol{A}$
is linearly independent.
\end{theorem}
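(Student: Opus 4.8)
The plan is to prove essential uniqueness by a slice-pencil argument that is ultimately closed by Kruskal's permutation lemma. Suppose $\boldsymbol{\chi}$ admits a second rank-$R$ decomposition with factor matrices $\bar{\boldsymbol{A}}^{(1)},\bar{\boldsymbol{A}}^{(2)},\bar{\boldsymbol{A}}^{(3)}$. Since every k-rank is bounded by $R$, condition \eqref{Kruskal-condition} forces each $k_{\boldsymbol{A}^{(i)}}\ge 2$; in particular no factor matrix has a zero column, and the k-rank-at-least-two requirement of the permutation lemma is automatically satisfied. The goal is to exhibit a single $R\times R$ permutation matrix $\boldsymbol{\Pi}$ and nonsingular diagonal matrices $\boldsymbol{\Lambda}_1,\boldsymbol{\Lambda}_2,\boldsymbol{\Lambda}_3$ with $\boldsymbol{\Lambda}_1\boldsymbol{\Lambda}_2\boldsymbol{\Lambda}_3=\boldsymbol{I}$ such that $\bar{\boldsymbol{A}}^{(i)}=\boldsymbol{A}^{(i)}\boldsymbol{\Pi}\boldsymbol{\Lambda}_i$ for $i=1,2,3$, which is exactly the asserted scaling-and-permutation ambiguity.

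First I would pass to the frontal slices of $\boldsymbol{\chi}$ along the third mode. The $k$th slice satisfies $\boldsymbol{X}_k=\boldsymbol{A}^{(1)}\operatorname{diag}(\boldsymbol{d}_k)(\boldsymbol{A}^{(2)})^T=\bar{\boldsymbol{A}}^{(1)}\operatorname{diag}(\bar{\boldsymbol{d}}_k)(\bar{\boldsymbol{A}}^{(2)})^T$, where $\boldsymbol{d}_k$ and $\bar{\boldsymbol{d}}_k$ are the $k$th rows of $\boldsymbol{A}^{(3)}$ and $\bar{\boldsymbol{A}}^{(3)}$. For a weight vector $\boldsymbol{x}\in\mathbb{C}^K$, the pencil $\boldsymbol{X}(\boldsymbol{x})\triangleq\sum_{k}x_k\boldsymbol{X}_k=\boldsymbol{A}^{(1)}\operatorname{diag}((\boldsymbol{A}^{(3)})^T\boldsymbol{x})(\boldsymbol{A}^{(2)})^T=\bar{\boldsymbol{A}}^{(1)}\operatorname{diag}((\bar{\boldsymbol{A}}^{(3)})^T\boldsymbol{x})(\bar{\boldsymbol{A}}^{(2)})^T$ gives a single matrix computable from either decomposition, which is the bridge between the two.

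The central estimate relates the rank of $\boldsymbol{X}(\boldsymbol{x})$ to the support size $\omega(\cdot)$, the number of nonzero entries, of the selecting vector $(\boldsymbol{A}^{(3)})^T\boldsymbol{x}$. If $\operatorname{diag}(\boldsymbol{d})$ has $w$ nonzero entries, then $\boldsymbol{A}^{(1)}\operatorname{diag}(\boldsymbol{d})(\boldsymbol{A}^{(2)})^T$ reduces to the product over the $w$ active columns; whenever $w\le\min\{k_{\boldsymbol{A}^{(1)}},k_{\boldsymbol{A}^{(2)}}\}$ both column-submatrices have full column rank $w$ and a Sylvester rank inequality pins $\operatorname{rank}\boldsymbol{X}(\boldsymbol{x})=w$ exactly. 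Equating this exact count for one decomposition with the corresponding rank bound for the other, and feeding in the inequalities $k_{\boldsymbol{A}^{(1)}}+k_{\boldsymbol{A}^{(2)}}\ge 2R+2-k_{\boldsymbol{A}^{(3)}}$ and its cyclic permutations implied by \eqref{Kruskal-condition}, I would obtain, on the regime of small supports, a domination between $\omega((\boldsymbol{A}^{(3)})^T\boldsymbol{x})$ and $\omega((\bar{\boldsymbol{A}}^{(3)})^T\boldsymbol{x})$ holding for every $\boldsymbol{x}$. This converts the tensor equality into a purely combinatorial statement about how the zero patterns of $\boldsymbol{A}^{(3)}$ and $\bar{\boldsymbol{A}}^{(3)}$ match under all linear functionals.

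The final step invokes Kruskal's permutation lemma: two matrices with $R$ nonzero columns, one of k-rank at least two, whose support patterns are dominated in the above sense under every linear functional, must be related by a column permutation and column scaling, giving $\bar{\boldsymbol{A}}^{(3)}=\boldsymbol{A}^{(3)}\boldsymbol{\Pi}\boldsymbol{\Lambda}_3$. Because \eqref{Kruskal-condition} is symmetric in the three modes, the same argument applied to the mode-$1$ and mode-$2$ unfoldings produces analogous relations; substituting the recovered correspondences back into a single nonzero slice and matching rank-one terms forces the three permutations to coincide and the diagonal scalings to satisfy $\boldsymbol{\Lambda}_1\boldsymbol{\Lambda}_2\boldsymbol{\Lambda}_3=\boldsymbol{I}$. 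I expect the permutation lemma to be the main obstacle: its own proof is a delicate induction and counting argument on the supports $\omega((\boldsymbol{A}^{(3)})^T\boldsymbol{x})$, and the subtle point is verifying that the single scalar bound \eqref{Kruskal-condition} is exactly strong enough to make the lemma's hypotheses hold simultaneously in all three modes, whereas the slice construction and the rank counting are comparatively routine linear algebra.
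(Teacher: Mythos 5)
Your sketch follows exactly the argument of the source the paper cites for this theorem (\cite{KruskalJoseph77}, in its standard streamlined form): slice pencils $\boldsymbol{A}^{(1)}\operatorname{diag}(\cdot)(\boldsymbol{A}^{(2)})^{T}$, Sylvester-type rank counting against the k-ranks, support domination $\omega((\boldsymbol{A}^{(3)})^{T}\boldsymbol{x})$ versus $\omega((\bar{\boldsymbol{A}}^{(3)})^{T}\boldsymbol{x})$, and closure via Kruskal's permutation lemma applied in all three modes; the paper itself offers no proof beyond this citation. The plan is correct and identifies the genuinely hard step (the permutation lemma and verifying its hypotheses from the single bound $k_{\boldsymbol{A}^{(1)}}+k_{\boldsymbol{A}^{(2)}}+k_{\boldsymbol{A}^{(3)}}\ge 2R+2$) accurately, so nothing further is needed.
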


Clearly, the above Kruskal's condition (\ref{Kruskal-condition})
does not hold if $k_{\boldsymbol{A}^{(i)}} = 1,\exists i = 1,2,3$.
Unfortunately, in our problem, the factor matrix $\boldsymbol{B}$
has redundant columns when $L_r \ne 1$, in which case we have
$\boldsymbol{\tilde{a}}_{\mathrm{BS}}(\phi_{u})=\boldsymbol{\tilde{a}}_{\mathrm{BS}}(\phi_{n})$,
for any $u\in \{u: \text{mod}(u,L)=n\}$. Redundant columns
indicate that $k_{\boldsymbol{B}} = 1$. Hence the Kruskal's
condition (\ref{Kruskal-condition}) cannot be satisfied.

To address this difficulty, note that the factor matrix
$\boldsymbol{C}$ (cf. (\ref{factor-matrix-C})) is a Vandermonde
matrix. Previous studies show that, even if the Kruskal's
condition does not hold valid, the CPD is still unique when one of
its factor matrices has a Vandermonde structure. The uniqueness
result was summarized as follows.

\begin{theorem} \label{theorem2}
Let ${\boldsymbol{{\chi }}} \in\mathbb{C} {^{I \times J \times
K}}$ be a third-order tensor decomposed of three factor matrices
${\boldsymbol{A}^{(1)}} \in\mathbb{C} {^{I \times
R}}$,${\boldsymbol{A}^{(2)}} \in\mathbb{C} {^{J \times R}}$ and
${\boldsymbol{A}^{(3)}} \in\mathbb{C} {^{K \times R}}$, where
${\boldsymbol{A}^{(3)}}$ is a Vandermonde matrix with distinct
generators. If the condition
\begin{align}
\begin{cases}
&\mathrm{rank}(\underline{\boldsymbol{A}}^{(3)} \odot \boldsymbol{A}^{(2)})=R \\
&\mathrm{rank}(\boldsymbol{A}^{(1)})=R \label{SCP-condition}
\end{cases}
\end{align}
is satisfied, then the CPD is unique, where
$\underline{\boldsymbol{A}}$ denotes a submatrix of
$\boldsymbol{A}$ that is obtained by removing the bottom row of
$\boldsymbol{A}$, and $\odot$ denotes the Khatri-Rao product.
\end{theorem}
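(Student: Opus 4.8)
The plan is to exploit the shift invariance built into the Vandermonde factor $\boldsymbol{A}^{(3)}$ and reduce the uniqueness question to a standard (ESPRIT-style) eigenvalue problem. Let $z_{1},\dots,z_{R}$ be the distinct generators of $\boldsymbol{A}^{(3)}$, so that $[\boldsymbol{A}^{(3)}]_{k,r}=z_{r}^{k-1}$, and write $\boldsymbol{Z}\triangleq\operatorname{diag}(z_{1},\dots,z_{R})$. First I would matricize $\boldsymbol{\chi}$ as $\boldsymbol{X}=\boldsymbol{M}(\boldsymbol{A}^{(1)})^{T}\in\mathbb{C}^{KJ\times I}$, where $\boldsymbol{M}\triangleq\boldsymbol{A}^{(3)}\odot\boldsymbol{A}^{(2)}$. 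The key structural observation is that, since $\boldsymbol{A}^{(3)}$ is Vandermonde, the $k$th $J\times R$ block of $\boldsymbol{M}$ equals $\boldsymbol{A}^{(2)}\boldsymbol{Z}^{k-1}$. Hence, if $\boldsymbol{M}_{1}$ collects the first $K-1$ blocks and $\boldsymbol{M}_{2}$ the last $K-1$ blocks, then $\boldsymbol{M}_{1}=\underline{\boldsymbol{A}}^{(3)}\odot\boldsymbol{A}^{(2)}$ and $\boldsymbol{M}_{2}=\overline{\boldsymbol{A}}^{(3)}\odot\boldsymbol{A}^{(2)}$ (with $\overline{\,\cdot\,}$ denoting deletion of the top row), and they satisfy the shift relation $\boldsymbol{M}_{2}=\boldsymbol{M}_{1}\boldsymbol{Z}$.

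Next I would recover the relevant column space. Because $\operatorname{rank}(\underline{\boldsymbol{A}}^{(3)}\odot\boldsymbol{A}^{(2)})=R$, the submatrix $\boldsymbol{M}_{1}$, and therefore $\boldsymbol{M}$ itself, has full column rank $R$; combined with $\operatorname{rank}(\boldsymbol{A}^{(1)})=R$ this yields $\operatorname{rank}(\boldsymbol{X})=R$ and, since $(\boldsymbol{A}^{(1)})^{T}$ is surjective onto $\mathbb{C}^{R}$, $\operatorname{range}(\boldsymbol{X})=\operatorname{range}(\boldsymbol{M})$. Thus every CPD of $\boldsymbol{\chi}$ produces the same column space, and from any basis $\boldsymbol{U}$ of it we may write $\boldsymbol{M}=\boldsymbol{U}\boldsymbol{T}$ for a unique nonsingular $\boldsymbol{T}\in\mathbb{C}^{R\times R}$. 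Partitioning $\boldsymbol{U}$ into blocks as above and inserting into the shift relation gives $\boldsymbol{U}_{2}\boldsymbol{T}=\boldsymbol{U}_{1}\boldsymbol{T}\boldsymbol{Z}$; since $\boldsymbol{U}_{1}$ inherits the full column rank of $\boldsymbol{M}_{1}$, left-multiplying by $\boldsymbol{U}_{1}^{\dagger}$ produces the eigenvalue equation $\boldsymbol{U}_{1}^{\dagger}\boldsymbol{U}_{2}=\boldsymbol{T}\boldsymbol{Z}\boldsymbol{T}^{-1}$.

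The heart of the argument — and the step I expect to be the main obstacle — is this eigendecomposition. Because the generators are \emph{distinct}, the data-defined matrix $\boldsymbol{U}_{1}^{\dagger}\boldsymbol{U}_{2}$ has $R$ distinct eigenvalues $z_{1},\dots,z_{R}$, so every eigenspace is one-dimensional and the eigenvectors, namely the columns of $\boldsymbol{T}$, are determined uniquely up to scaling and reordering. As $\boldsymbol{U}$ is fixed by the data, this pins down $\boldsymbol{M}=\boldsymbol{U}\boldsymbol{T}$ up to column scaling and permutation, which is precisely the admissible CPD ambiguity. The delicate points to check carefully here are that the stated rank conditions genuinely force $\boldsymbol{U}_{1}$ to have full column rank (so that the pseudoinverse step is legitimate and $\boldsymbol{U}_{1}^{\dagger}\boldsymbol{U}_{1}=\boldsymbol{I}$), and that distinctness of the generators is exactly what excludes repeated eigenvalues and hence any residual rotational freedom.

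Finally I would recover the individual factors. Each column of $\boldsymbol{M}$ is a Kronecker product $\boldsymbol{a}^{(3)}_{r}\otimes\boldsymbol{a}^{(2)}_{r}$, which reshapes into a rank-one matrix and therefore factors into $\boldsymbol{a}^{(3)}_{r}$ and $\boldsymbol{a}^{(2)}_{r}$ uniquely up to reciprocal scaling (indeed the recovered eigenvalues $z_{r}$ fix $\boldsymbol{A}^{(3)}$ outright through its Vandermonde form). Once $\boldsymbol{M}$ is determined up to column scaling and permutation, the relation $(\boldsymbol{A}^{(1)})^{T}=\boldsymbol{M}^{\dagger}\boldsymbol{X}$ recovers $\boldsymbol{A}^{(1)}$. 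Consequently all three factor matrices are determined up to a common column permutation and compensating scalings, which establishes the uniqueness of the CPD and completes the proof.
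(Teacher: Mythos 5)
Your proof reconstructs, essentially, the shift-invariance (ESPRIT-type) argument of S{\o}rensen and De~Lathauwer: the paper itself gives no proof beyond the citation \cite{SorensenLathauwer13}, and its Section~III-C replays exactly your chain of relations ($\boldsymbol{M}_2=\boldsymbol{M}_1\boldsymbol{Z}$, $\boldsymbol{U}_2\boldsymbol{T}=\boldsymbol{U}_1\boldsymbol{T}\boldsymbol{Z}$, and the EVD $\boldsymbol{U}_1^{\dagger}\boldsymbol{U}_2=\boldsymbol{T}\boldsymbol{Z}\boldsymbol{T}^{-1}$) as the estimation algorithm. Your block identities are right, the rank bookkeeping is right (full column rank of $\underline{\boldsymbol{A}}^{(3)}\odot\boldsymbol{A}^{(2)}$ passes to $\boldsymbol{U}_1$, so $\boldsymbol{U}_1^{\dagger}\boldsymbol{U}_1=\boldsymbol{I}$ and the pseudoinverse step is legitimate), and the distinct-generator/distinct-eigenvalue step is right.

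The genuine gap is the final leap: ``this pins down $\boldsymbol{M}=\boldsymbol{U}\boldsymbol{T}$ up to column scaling and permutation, which is precisely the admissible CPD ambiguity.'' A competing rank-$R$ decomposition with factors $(\boldsymbol{B}^{(1)},\boldsymbol{B}^{(2)},\boldsymbol{B}^{(3)})$ yields $\boldsymbol{M}'=\boldsymbol{B}^{(3)}\odot\boldsymbol{B}^{(2)}=\boldsymbol{U}\boldsymbol{T}'$ for some nonsingular $\boldsymbol{T}'$, but your eigenvector argument constrains $\boldsymbol{T}'$ only if $\boldsymbol{M}'$ obeys its own shift relation $\boldsymbol{M}_2'=\boldsymbol{M}_1'\boldsymbol{Z}'$, i.e., only if $\boldsymbol{B}^{(3)}$ is itself Vandermonde; for an arbitrary competitor no such relation exists, and nothing forces its $\boldsymbol{T}'$ to diagonalize $\boldsymbol{U}_1^{\dagger}\boldsymbol{U}_2$. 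Worse, \emph{unconstrained} CPD uniqueness is actually false under the stated hypotheses: they permit $\boldsymbol{A}^{(2)}$ to have proportional columns (this is exactly the paper's motivating case, $k_{\boldsymbol{B}}=1$), and if $\boldsymbol{a}^{(2)}_u=\boldsymbol{a}^{(2)}_v$ then those two terms contribute $\boldsymbol{a}^{(1)}_u(\boldsymbol{a}^{(3)}_u)^T+\boldsymbol{a}^{(1)}_v(\boldsymbol{a}^{(3)}_v)^T$ attached to a single mode-2 vector; this rank-2 matrix admits infinitely many rank-one re-splittings, each producing another rank-$R$ CPD whose third-mode vectors are linear combinations of two distinct geometric progressions and hence generally not Vandermonde. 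So what your argument actually proves --- and what \cite{SorensenLathauwer13} asserts --- is uniqueness of the \emph{Vandermonde-constrained} CPD, i.e., uniqueness among decompositions whose third factor is Vandermonde (within that class your proof is complete: nonsingularity of $\boldsymbol{T}'$ together with one-dimensional eigenspaces forces $\boldsymbol{T}'=\boldsymbol{T}\boldsymbol{\Pi}\boldsymbol{\Lambda}$, and even repeated generators in $\boldsymbol{B}^{(3)}$ are automatically excluded). You need to state this scope restriction explicitly; read as full CPD uniqueness, the claim cannot be rescued.
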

\begin{proof}
See \cite{SorensenLathauwer13}.
\end{proof}

From Theorem \ref{theorem2}, we know that if
\begin{align}
\begin{cases}
&\mathrm{rank}\left(\underline{\boldsymbol{C}} \odot \boldsymbol{B}\right)=U \\
&\mathrm{rank}\left(\boldsymbol{A}\right)=U \label{condition}
\end{cases}
\end{align}
is satisfied and $\boldsymbol{C}$ is a Vandermonde matrix with
distinct generators, then the CP decomposition of
$\boldsymbol{\cal{Y}}$ is unique.

We first examine the rank of $(\underline{\boldsymbol{C}} \odot
\boldsymbol{B})$. Note that the factor matrix $\boldsymbol{C}$ is
a Vandermonde matrix with distinct generators, as we generally
have $\iota_{i}\neq\iota_{j},\forall i\neq j$. Thus the matrix
$(\underline{\boldsymbol{C}} \odot \boldsymbol{B})$ has full
column rank even if $\boldsymbol{B}$ has linearly dependent
columns, provided that $(P-1)T\geq U$ \cite{SorensenLathauwer13}.
On the other hand, recall that the factor matrix $\boldsymbol{A}$
has a form as
\begin{align}
\boldsymbol{A}=\boldsymbol{V}^T
[{\boldsymbol{a}}_{\mathrm{IRS}}\left(\omega_{a,
1},\omega_{e,1}\right)\phantom{0}...\phantom{0}{\boldsymbol{a}}_{\mathrm{IRS}}\left(\omega_{a,
U}, \omega_{e,U}\right)]=
\boldsymbol{V}^T{\boldsymbol{A}_{\mathrm{IRS}}}
\end{align}
Note that $\boldsymbol{A}_{\mathrm{IRS}}$ is a matrix consisting
of a set of steering vectors characterized by different angular
parameters. When entries of $\boldsymbol{V}$ are chosen uniformly
from a unit circle, it is shown in \cite{ZhouFang17} that the
k-rank of $\boldsymbol{A}$ is equal to $\min\{Q,U\}$. When $Q\geq
U$, we have $\mathrm{rank}\left(\boldsymbol{A}\right)=U$.

In summary, conditions (\ref{condition}) are generally satisfied
when $\min \left( {\left( {P - 1} \right)T,Q} \right) \ge U$.
Since the total number of measurements required for our method is
$PTQ$, conditions (\ref{condition}) imply that our proposed method
has a sample complexity of $\mathcal{O}(U^2)$, which only depends
on the sparsity of the cascade channel.

\subsection{CP Decomposition}
We introduce the method \cite{SorensenLathauwer13,LinJin20} to
recover the factor matrices of $\boldsymbol{\cal{Y}}$ by
exploiting the Vandermonde structure inherent in the factor
matrix. Consider the mode-1 unfolding of the received tensor
${\boldsymbol{\cal Y}}$:
\begin{align}
\boldsymbol{Y}_{\left( 1 \right)}^T = \left( {\boldsymbol{C} \odot
\boldsymbol{B}} \right){\boldsymbol{A}^T}+ \boldsymbol{N}_{\left(
1 \right)}^T
\end{align}
and perform the truncated singular value decomposition (SVD)
$\boldsymbol{Y}_{\left( 1 \right)}^T =
\boldsymbol{U}\mathbf{\Sigma} {\boldsymbol{V}^H} \in\mathbb{C}
{^{TP \times Q}}$, where $\boldsymbol{U} \in \mathbb{C}{^{TP
\times U}}$, $\boldsymbol{\Sigma}\in \mathbb{C}{^{ U \times  U}}$
and $\boldsymbol{V}\in \mathbb{C}{^{Q \times U}}$. Here $U$ can be
estimated via a minimum description length (MDL) criterion
\cite{LiuCosta16}.

Ignoring the noise, from (\ref{condition}), we know that there
exists a nonsingular matrix $\boldsymbol{M} \in \mathbb{C}{^{U
\times U}}$ such that
\begin{align}\label{eqn3}
\boldsymbol{U M} = \boldsymbol{C} \odot \boldsymbol{B}
\end{align}
The above equation implies that
\begin{align} \label{eqn4}
\boldsymbol{U}_{1} \boldsymbol{M} &=  \underline{\boldsymbol{C}} \odot \boldsymbol{B} \\
\boldsymbol{U}_{2} \boldsymbol{M} &= \overline{\boldsymbol{C}}
\odot \boldsymbol{B}
\end{align}
where $\overline{\boldsymbol{A}}$ denotes a submatrix of
$\boldsymbol{A}$ obtained by removing the top row of
$\boldsymbol{A}$, and
\begin{align}
\boldsymbol{U}_{1}=\boldsymbol{U}(1:(P-1) T,:) \in \mathbb{C}^{(P-1) T \times U} \\
\boldsymbol{U}_{2}=\boldsymbol{U}(T+1: P T,:) \in
\mathbb{C}^{(P-1) T \times U}
\end{align}
On the other hand, by utilizing the Vandermonde structure of
$\boldsymbol{C}$, we have
\begin{align} \label{eqn5}
\left( \underline{\boldsymbol{C}} \odot \boldsymbol{B}
\right)\boldsymbol{Z} = \overline{\boldsymbol{C}} \odot
\boldsymbol{B}
\end{align}
where $\boldsymbol{Z} \triangleq {\rm{diag}}\left( {{z_1},...,{z_
U}} \right)$, and ${{z_u} \buildrel \Delta \over = {e^{ - j2\pi
{{{f_s}}\over {{P_0}}}{\iota _u}}}}$ is the generator of the
factor matrix $\boldsymbol{C}$. Combining
(\ref{eqn4})--(\ref{eqn5}), we arrive at
\begin{align}
{\boldsymbol{U}_2}\boldsymbol{M} =
{\boldsymbol{U}_1}\boldsymbol{MZ}
\end{align}
Since ${\underline{\boldsymbol{C}} \odot \boldsymbol{B}}$ is full
column rank, both ${\boldsymbol{U}_1}$ and ${\boldsymbol{U}_2}$
are full column rank. Therefore, the generators $\left\{ {{\hat
z_u}} \right\}_{u = 1}^ U$ and $\boldsymbol{\hat M}$ can be
obtained from the eigenvalue decomposition (EVD) of
${\boldsymbol{U}_1^{\dagger} {\boldsymbol{U}_2} = \boldsymbol{\hat
M\hat Z\hat M}^{ - 1}}$. Each column of the factor matrix
$\boldsymbol{C}$ can be estimated as
\begin{align}
{\boldsymbol{\hat c}_u} = \left[ {{{\hat z}_u}\phantom{0}{\hat
z}_u^2\phantom{0} \ldots\phantom{0} {\hat z}_u^P} \right]^T
\end{align}
According to (\ref{eqn3}), the column of the factor matrix
$\boldsymbol{B}$ can be estimated as
\begin{align}
{{\boldsymbol{\hat b}}_u} \buildrel \Delta \over = \bigg(
{{{{\boldsymbol{\hat c}}_u^H} \over {{\boldsymbol{\hat
c}}_u^H{{\boldsymbol{\hat c}}_u}}} \otimes {{\boldsymbol{I}}_T}}
\bigg) {\boldsymbol{U}}{\boldsymbol{\hat M}}{{\left( {:,u}
\right)}}
\end{align}
Finally, given $\boldsymbol{\hat B}$ and $\boldsymbol{\hat C}$,
the factor matrix $\boldsymbol{A}$ can be given as
\begin{equation}
\boldsymbol{\hat A}=\boldsymbol{Y}_{(1)}\left((\boldsymbol{\hat C}
\odot \boldsymbol{\hat B})^{T}\right)^{\dagger}
\end{equation}

\subsection{Channel Estimation}
After obtaining $\hat{\boldsymbol{A}}$, $\hat{\boldsymbol{B}}$ and
$\hat{\boldsymbol{C}}$, we now proceed to estimate the channel
parameters $\{{{{\hat \omega }_{a,u}},{{\hat \omega
}_{a,e}},{{\hat \phi }_u},{{\hat \iota }_u},{{\hat \beta }_u}}
\}_{u = 1}^{U}$. From the above discussion, we know that the
estimated $\{\hat{\boldsymbol{A}}
,\hat{\boldsymbol{B}},\hat{\boldsymbol{C}}\}$ and the true factor
matrices $\{{\boldsymbol{A}} ,{\boldsymbol{B}},{\boldsymbol{C}}\}$
are related as
\begin{align}
\nonumber \hat{\boldsymbol{A}}
&=\boldsymbol{A}\boldsymbol{\Psi}_1\boldsymbol{\Gamma}+\boldsymbol{E}_1
\\\nonumber
\hat{\boldsymbol{B}} &=\boldsymbol{B}\boldsymbol{ \Psi}_2 \boldsymbol{\Gamma}+\boldsymbol{E}_2 \\
\hat{\boldsymbol{C}} &=\boldsymbol{C}
\boldsymbol{\Gamma}+\boldsymbol{E}_3
\end{align}
where $\{\boldsymbol{\Psi}_1, \boldsymbol{\Psi}_2\}$ are
nonsingular diagonal matrices which satisfy $\boldsymbol{\Psi}_1
\boldsymbol{\Psi}_2=\boldsymbol{I}_{U}$, and $\{\boldsymbol{E}_1,
\boldsymbol{E}_2, \boldsymbol{E}_3\}$ are estimation errors.
$\boldsymbol{\Gamma}$ is an unknown permutation matrix. This
permutation matrix $\boldsymbol{\Gamma}$ is common to all factor
matrices, and thus can be ignored.

From an estimated generators $\{\hat{z}_u\}$, the delay parameter
${\hat \iota _u}$ can be estimated as
\begin{align}
{\hat \iota _u} \buildrel \Delta \over =  - {{{P_0}} \over {2\pi
{f_s}}}\arg(\hat{z}_u)
\end{align}
where $\arg(\hat{z}_u)$ denotes the argument of the complex number
$\hat{z}_u$. Recall that each column of the factor matrix
$\boldsymbol{A}$ is characterized by angle parameters
$\{\omega_{a,u},\omega _{e,u}\}$. Therefore these two angle
parameters can be estimated through a correlation-based estimator:
\begin{align}
{\hat \omega _{a,u}},{\hat \omega _{e,u}} = \arg \mathop {\max }
\limits_{{\omega _{a,u}},{\omega _{e,u}}} {{\left|
{{\boldsymbol{\hat a}}_u^H{{{\boldsymbol{\tilde a}}}_{{\rm{IRS}}}}
\left( {{\omega _{a,u}},{\omega _{e,u}}} \right)} \right|} \over
{{{\left\| {{{{\boldsymbol{\hat a}}}_u}} \right\|}_2}{{\left\|
{{{{\boldsymbol{\tilde a}}}_{{\rm{IRS}}}} \left( {{\omega
_{a,u}},{\omega _{e,u}}} \right)} \right\|}_2}}}
\end{align}
where ${{\boldsymbol{\hat a}}_u}$ denotes the $u$th column of
$\hat{\boldsymbol{A}}$. Similarly, the AoD associated with the BS,
can be estimated as
\begin{align}
{\hat \phi _u} = \arg \mathop {\max }\limits_{{\phi _u}} {{|
{{\boldsymbol{\hat b}}_u^H{{{\boldsymbol{\tilde a}}}_{{\rm{BS}}}}
( {{\phi _u}} )} |} \over {{{\| {{{{\boldsymbol{\hat b}}}_u}}
\|}_2}{{\left\| {{{{\boldsymbol{\tilde a}}}_{{\rm{BS}}}}\left(
{{\phi _u}} \right)} \right\|}_2}}}
\end{align}
where ${{\boldsymbol{\hat b}}_u}$ denotes the $u$th column of
$\hat{\boldsymbol{B}}$.

Next, we try to recover the composite path loss gains $\{\hat
{\beta}_u\}$. After obtaining $\{{\hat \omega _{a,u}},{\hat
\omega _{e,u}}\}$, the factor matrix $\boldsymbol{A}$ can be
accordingly estimated as
\begin{gather}
\boldsymbol{\tilde{A}}=\left[\tilde{\boldsymbol{a}}_{\mathrm{IRS}}\left(\hat
\omega_{a, 1}, \hat \omega_{e, 1}\right)\phantom{0}
\ldots\phantom{0} \tilde{\boldsymbol{a}}_{\mathrm{IRS}}\left(\hat
\omega_{a, U}, \hat \omega_{e, U}\right)\right]
\end{gather}
Ignoring the estimation errors, $\boldsymbol{\hat{A}}$ and
$\boldsymbol{\tilde{A}}$ are related as
$\boldsymbol{\hat{A}}=\boldsymbol{\tilde{A}}\boldsymbol{\Psi}_1$.
Hence the nonsingular diagonal matrix $\boldsymbol{\Psi}_1$ can be
estimated as ${\boldsymbol \Psi}_1 = {\tilde {\boldsymbol
A}}^\dagger{\hat{ \boldsymbol A}}$. Since ${{\boldsymbol
\Psi}_1}{{\boldsymbol \Psi}_2}={\mathbf I}_U$, ${\boldsymbol
\Psi}_2$ can be obtained as ${\boldsymbol \Psi}_2 =
\boldsymbol{\Psi}_{1}^{-1}$.

On the other hand, after obtaining ${\hat{\phi}}_{u}$, we can
construct a new matrix
\begin{align} \label{eqn7}
\nonumber \boldsymbol{\tilde{B}}&=[
\tilde{\boldsymbol{a}}_{\mathrm{BS}}( {\hat{\phi}}_{1})\phantom{0}
\ldots\phantom{0}\tilde{\boldsymbol{a}}_{\mathrm{BS}}(
{\hat{\phi}}_{U})]
\end{align}
Ideally we should have
$\boldsymbol{B}=\boldsymbol{\tilde{B}}\boldsymbol{G}$, where
$\boldsymbol{G}\triangleq\text{diag}(\beta _{1},\ldots,\beta
_{U})$. Moreover, ignoring estimation errors, we should have
$\boldsymbol{\hat{B}}=\boldsymbol{B}\boldsymbol{\Psi}_2$.
Therefore $\boldsymbol{G}$ can be estimated as
\begin{gather}
{\hat {\boldsymbol G}} = \boldsymbol{\tilde{B}}^{\dagger} {\hat
{\boldsymbol B}}\boldsymbol{\Psi}_2^{-1}
\end{gather}
Finally, the cascade channels $\{\boldsymbol{H}_p\}$ can be
estimated after those parameters $\{ {{{\hat \omega
}_{a,u}},{{\hat \omega }_{a,e}},{{\hat \phi }_u}, {{\hat \iota
}_u},{{\hat \beta }_u}} \}_{u=1}^U$ are obtained.

\begin{figure}[t]
\centering
\includegraphics[height=5cm]{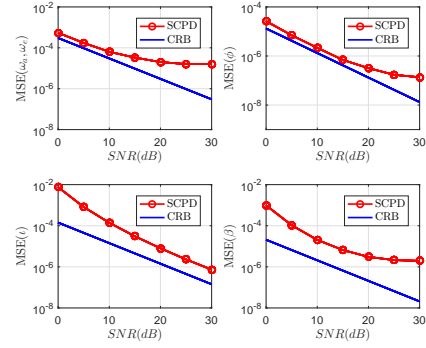}
\caption{MSEs and CRBs of channel parameters vs. SNR.}
\label{fig3}
\end{figure}

\section{Simulation Results}
In this section, we present simulation results to evaluate the
performance of the proposed structured CPD-based (SCPD) method. We
assume that the BS employs a ULA with $N=64$ antennas and $R=1$ RF
chain, the IRS is equipped with $M=16 \times 16$ passive
reflecting elements. In our simulations, the angular parameters
$\left\{\vartheta_{a, l}, \vartheta_{e, l}\right\}_{l=1}^{L}$,
$\left\{\phi_{l}\right\}_{l=1}^{L}$, and $\left\{\chi_{a, l},
\chi_{e, l}\right\}_{l=1}^{L_r}$ are randomly generated from
$[0,2\pi]$, where we set $L=2$ and $L_r=2$. The delay spreads
$\left\{\tau_l\right\}_{l=1}^{L},\left\{\kappa_l\right\}_{l=1}^{L_r}$
are drawn from a uniform distribution ${\cal{U}}(0,100{\rm{ns}})$.
The complex gains
$\left\{\alpha_l\right\}_{l=1}^{L}(\left\{\varrho_l\right\}_{l=1}^{L_r})$
follow a circularly symmetric Gaussian distribution
${\cal{CN}}(0,(c/{4{\pi}D_1f_c})^2)$
$({\cal{CN}}(0,(c/{4{\pi}D_{2,l}f_c})^2))$, where $c$ is the speed
of light, $D_1$ is the distance from the BS to the IRS, $D_{2,l}$
denotes the length of the $l$th path from the IRS to the user, and
$f_c$ is the carrier frequency. We set $D_1=30m$ and
$f_c=28\rm{GHz}$ in our experiments. The total number of
subcarriers is set to $P_0=128$, among which $P$ subcarriers are
used for training. The sampling rate is set to $f_s=0.32\rm GHz$.
The signal-to-noise ratio (SNR) is defined as
\begin{align}
\text{SNR} \triangleq \frac{\|\boldsymbol{\cal { Y
}}-\boldsymbol{\cal{N}}\|_{F}^{2}}{\|\boldsymbol{\cal{N
}}\|_{F}^{2}}
\end{align}

\begin{figure}[!t]
 \centering
\begin{tabular}{cc}
\hspace*{-3ex}
\includegraphics[width=4.5cm,height=4.5cm]{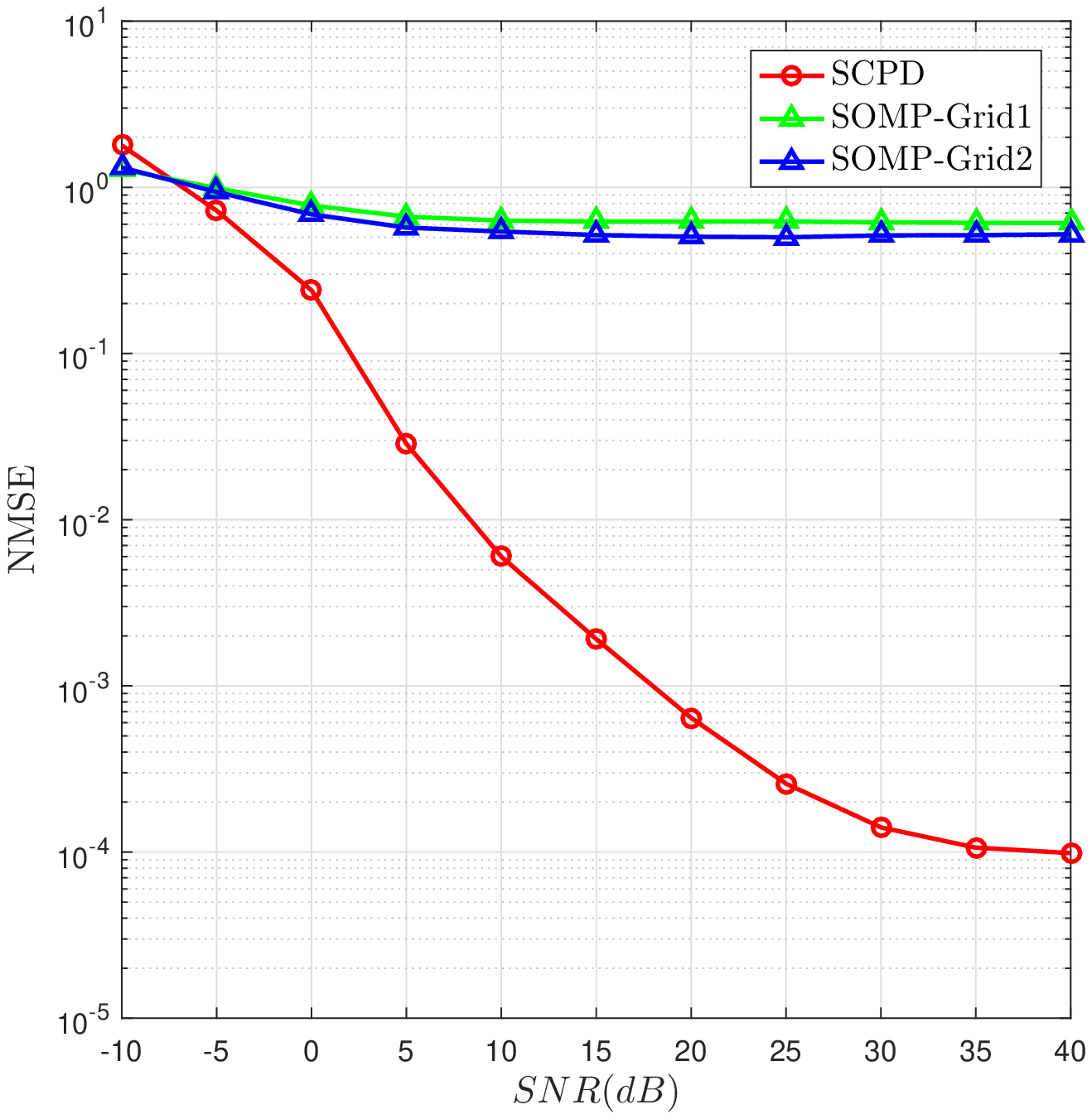}&
\hspace*{-5ex}
\includegraphics[width=4.5cm,height=4.5cm]{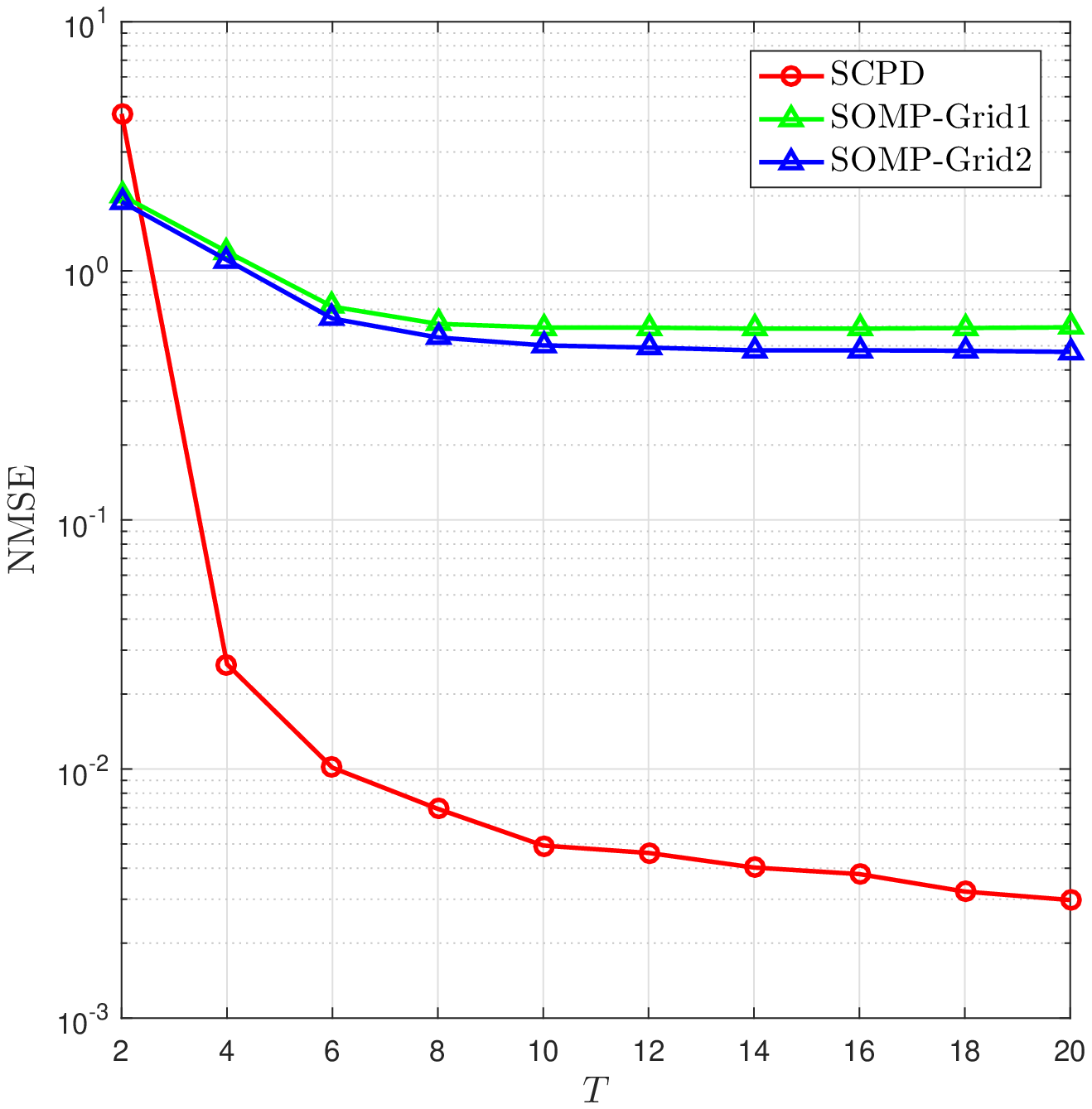}
\\
(a)& (b)
\end{tabular}
  \caption{(a). NMSEs of respective methods vs. SNR, where $P=8$, $T=8$, $Q=8$
  ; (b). NMSEs of respective methods vs. $T$, where $P=8$, $Q=8$, and $\text{SNR}=10$dB.}
   \label{fig4}
\end{figure}

We firstly examine the estimation accuracy of the channel
parameters $\left\{ {{{\omega }_{a,u}},{{ \omega }_{a,e}},{{\phi
}_u},{{\iota }_u},{{\beta }_u}} \right\}_{u = 1}^{ U}$. The CRB
results are also included to provide a benchmark for evaluating
the performance of our proposed method. Note that our estimation
problem has a form similar to that of \cite{ZhouFang17}. Therefore
its CRB can be derived by following the derivations developed in
\cite{ZhouFang17}. In Fig. \ref{fig3}, we depict the mean square
errors (MSEs) of our proposed method versus the SNR, where we set
$P=16$, $T=16$, $Q=16$. From Fig. \ref{fig3}, we see that our
proposed method can obtain accurate estimates of the angular
parameters and the time delays. Its estimation errors are close to
the theoretical lower bound. The estimates of the composite path
gains are not as close to the CRB as other parameters, probably
because the composite path gains are not directly estimated from
the factor matrices.

Next, we report the overall channel estimation performance. By
transforming the channel estimation problem into a MMV compressed
sensing problem, the simultaneous-OMP method (SOMP)
\cite{TroppGilbert06} can also be used to estimate the channel.
For the SOMP, two different grids are employed to discretize the
continuous parameter space: the first grid discretizes the
AoA-AoD-time delay space into $ \left(32 \times 32 \right) \times
128 \times 64$ points, and the second grid discretizes the
AoA-AoD-time delay space into $ \left(64 \times 64 \right) \times
256 \times 128$ points. In Fig. \ref{fig4}, we plot the estimation
performance of respective methods as a function of the SNR and the
number of time frames $T$. The performance is evaluated via the
normalized mean squared error (NMSE) of the cascaded channel,
which is defined as $\sum_{p=1}^{P}\|\boldsymbol{\hat
H}_{p}-\boldsymbol{H}_{p}\|_F^2/\sum_{p=1}^{P}\|\boldsymbol{H}_{p}\|_F^2$.
From these results, we see that the proposed method presents a
substantial performance improvement over the SOMP method. In
addition, we observe that our proposed method can provide reliable
channel estimation when $P=Q=8$ and $T=4$, which corresponds to a
total number of $256$ measurements for training. This result
indicates that the proposed method can achieve a substantial
training overhead reduction.

\section{Conclusion}
In this paper, we developed a CPD-based channel estimation method
for IRS-assisted mmWave OFDM systems. The proposed method exploits
the inherent low-rank structure of the cascade channels and the
inherent Vandermonde structure of the factor matrix. Our analysis
shows that the proposed method only requires a modest amount of
training overhead to extract the channel parameters. Simulation
results were provided to illustrate the efficiency of the proposed
method.

%\bibliography{newbib}
%\bibliographystyle{IEEEtran}

\end{document}